\documentclass[letterpaper, 10 pt, conference]{ieeeconf}  

\IEEEoverridecommandlockouts                              

\usepackage{cite}
\usepackage{amsmath,amssymb,amsfonts}
\usepackage{algorithmic}
\usepackage{graphicx}
\usepackage{algorithm,algorithmic}
\usepackage{hyperref}
\hypersetup{hidelinks=true}
\usepackage{textcomp}
\usepackage{mathrsfs}
\usepackage{mathtools}
\usepackage{graphicx}
\usepackage{textcomp}
\usepackage{algorithm}
\usepackage{algorithmic}
\usepackage{array}
\usepackage{cleveref}
\newtheorem{theorem}{Theorem}
\newtheorem{proposition}[theorem]{Proposition}
\newtheorem{definition}[theorem]{Definition}

\newtheorem{lemma}[theorem]{Lemma}

\newtheorem{example}{Example}

\AddToHook{env/proposition/begin}{\crefalias{theorem}{proposition}}
\AddToHook{env/definition/begin}{\crefalias{theorem}{definition}}
\AddToHook{env/corollary/begin}{\crefalias{theorem}{corollary}}
\AddToHook{env/lemma/begin}{\crefalias{theorem}{lemma}}
\AddToHook{env/remark/begin}{\crefalias{theorem}{remark}}
\AddToHook{env/example/begin}{\crefalias{theorem}{example}}
\usepackage{tikz}
\usetikzlibrary{cd}
\usetikzlibrary{shapes.geometric, arrows}
\newcommand{\R}{\mathbb{R}}
\newcommand{\B}{\mathfrak{B}}
\newcommand{\N}{\mathbb{N}}

\newcommand{\bs}[1]{\boldsymbol{#1}}

\newcommand{\sigf}{\sigma_t^{},\sigma_s^{}}
\newcommand{\sig}{\sigma_t^{},\sigma_s}
\newcommand{\sigs}{\sigma_s^{}}
\newcommand{\sigp}{\sigma_t,\sigma_s}

\newcommand{\Br}[1]{\B_{\mathcal{S}_{0,#1}}}

\newcommand{\A}{\mathcal{A}}
\newcommand{\poly}[1]{\A^{#1}}

\newcommand{\sh}{\mathcal{S}_{0,H}}

\title{\LARGE \bf From 2D system to 1D system via restriction to a vertical strip: applications to time-controllability and zero-time-controllability }

\author{Aishwarya Bharti and Debasattam Pal
\thanks{Aishwarya Bharti and Debasattam Pal are with the Department of Electrical Engineering, Indian Institute of Technology Bombay, India. (e-mail: ab@ee.iitb.ac.in, debasattam@ee.iitb.ac.in). Accepted at the 12th Indian Control Conference (ICC-12), 2027.}}%

\begin{document}

\maketitle
\thispagestyle{empty}
\pagestyle{empty}

\begin{abstract}
Motivated by the role of finite-time restriction in one-dimensional (1D) systems, we investigate the restriction of two-dimensional (2D) systems defined over $\N^2$ to a vertical strip. We give a constructive proof that such a restriction induces a 1D system. A Gr\"obner basis-based algorithm is then developed to compute this restricted 1D system from a given 2D system. The framework is then applied to develop an algorithm for testing the time-controllability and zero-time-controllability of scalar autonomous 2D systems; these properties are known to be equivalent to the existence of a deadbeat controller. Finally, we show that, for time-controllable or zero-time-controllable scalar autonomous 2D systems, the induced 1D system generates all the admissible trajectories of the original 2D system, thereby reducing the complexity of trajectory generation.
\end{abstract}

\section{INTRODUCTION}

In this paper, we consider two-dimensional (2D) discrete systems, which are sets of trajectories defined over $\N^2$ and described by the solution set of partial difference equations with constant coefficients. The two independent variables may represent two space coordinates, as in image processing applications or mixed time
and space variables as encountered in seismology, learning and repetitive processes \cite{rocha1990structurethesis, Appli:rogers2015multidimensional}. It is well-known that for one-dimensional (1D) systems, under suitable informativity conditions, the trajectories observed over a finite time duration are sufficient to characterise and analyse the entire system \cite{willems2005note}. This observation forms the basis of subspace identification techniques \cite{van2012subspace} and, more recently, data-driven analysis techniques for 1D systems (see, e.g., \cite{van2020data, van2023behavioral}). Motivated by these developments, it is crucial to investigate whether analogous results hold for 2D systems. Understanding the restriction of trajectories of a 2D system to a finite interval is a crucial step in this direction. In general, however, 2D systems lack an inherent notion of causality in a fixed direction, since both independent variables play equal roles. Nonetheless, in many engineering applications, one of the independent variables represents time, thereby inducing a natural notion of causality \cite{napp2011time, sasane2002time}. Therefore, treating the first independent variable as time, we study the restriction of 2D trajectories to a finite time interval, namely, on the \emph{vertical strip} in $\N^2$ grid given by $\left\{ (h,k)\in \N^2\mid 0\leqslant h\leqslant H,~ k\in \N \right\}$ for $H\in \N$. 

It is shown that the restriction of a 2D system to a vertical strip is a 1D behaviour \cite{rocha1990}, i.e., the trajectories observed in a vertical strip are governed by a set of linear difference equations. However, \cite{rocha1990} only provides the existence of a kernel representation of this restricted 1D behaviour. In this paper, we provide a new constructive proof of this result, thereby devising a method to obtain the representation of the restricted 1D behaviour. Consequently, we employ Gr\"obner basis to present an algorithm that, for a given 2D system, computes the corresponding 1D system governing the restriction of trajectories to a vertical strip.

We use the developed framework to derive an algorithm for testing time-controllability and zero time-controllability of scalar autonomous 2D systems. A 2D system is time-controllable if any two arbitrary trajectories can be patched on an initial vertical strip and on a sufficiently distant half-plane; zero-time-controllability corresponds to such a patching of an arbitrary trajectory with zero trajectory \cite{valcher_bisiacco2013zero}. It is shown in \cite{valcher_bisiacco2013zero} that a 2D system admits a deadbeat controller (DBC) (a controller that acts by full interconnection and makes the resulting system nilpotent) if and only if it is zero-time-controllable. Thus, in essence, we provide a test for the existence of a DBC for scalar autonomous 2D systems. For time-controllable/zero-time-controllable systems, we further show that the associated 1D system obtained through the proposed algorithm is sufficient to generate all admissible trajectories of the original 2D system. This enables data generation by simulating the corresponding 1D system, thereby reducing computational complexity. 

The paper is organised as follows: We begin with preliminaries in \Cref{sec:prelim}. Then, we constructively prove that the restriction of a 2D system to a vertical strip is a 1D system in \Cref{sec:2Dto1D}. In \Cref{sec:GB}, we present a Gr\"obner basis-based algorithm to compute the kernel representation of the restricted 1D behaviour for a given 2D behaviour. Finally, in \Cref{sec:TC-ZTC}, we apply these results to develop an algorithm for testing time-controllability and zero-time-controllability of scalar autonomous 2D systems. 

\paragraph{Notation}\label{notation}
We use $\R$ and $\N$ to denote the sets of real and natural numbers (including $0$), respectively. For two sets $\mathbb{W}$, $\mathbb{T}$, the notation $\mathbb{W}^{\mathbb{T}}$ denotes the set of maps from $\mathbb{T}$ to $\mathbb{W}$, i.e., $\mathbb{W}^\mathbb{T}\coloneqq \left\{f\mid f: \mathbb{T}\to \mathbb{W}\right\}$.
The two-variable polynomial ring in $\sigma_t$ and $\sigma_s$, usually written as $\R[\sig]$, is denoted here by $\mathcal{A}$. Similarly, $\mathcal{A}_{s}$ and $\mathcal{A}_t$ denotes the one-variable polynomial rings $\R[\sigma_s]$ and $\R[\sigma_t]$, respectively. Accordingly, $\R^{m\times n}$, $\A^{m\times n}$ and $\A_s^{m\times n}$ denote the set of all $m\times n$ matrices with entries from $\R$, $\A$ and $\A_s$, respectively. We use $\A^{\bullet\times q}$ when the number of rows is unspecified.
For a matrix $R(\sigf)\in \A^{m\times n}$, expressed as $R(\sigf)\coloneqq \sum_{i=0}^{d}R_i(\sigs)\sigma_t^i$, where $R_i(\sigs)\in \mathcal{A}_s^{m\times n}, ~i=0,\ldots,d$ and $R_{d}(\sigs)\neq 0$, the $\sigma_t$-degree is $d$, denoted as $\deg_{\sigma_t} R(\sigf)=d$. 
For $P_i\in \A^{m_i\times n}$, $i=1,\ldots,k$, $\mathrm{col}(P_1,P_2,\ldots,P_k)$ denotes the $(\sum_{i=1}^{k}m_i)\times n$ matrix obtained by stacking $P_i$'s vertically. 
For $R(\sigf)\in \A^{m\times n}$, its row-span over $\A$ is given by $\mathrm{rowspan}_{\mathcal{A}}R(\sigf)=\left\{x(\sig)R(\sig)\mid x(\sig)\in \mathcal{A}^{1\times m}\right\}$. For a subset $G\in \A^{1\times q}$, we denote by $\langle G\rangle_{\A}$ the $\A$-module generated by elements of $G$. We use $I_{n}$ to denote the $n\times n$ identity matrix.

\section{ Preliminaries}\label{sec:prelim}
In this paper, we consider two-dimensional (2D) systems described by constant-coefficient partial difference equations defined on $\N^2$. Specifically, we consider trajectories $\bs{w}\in (\R^q)^{\N^2}$ satisfying
\begin{align}\label{eq:system}
    \sum_{(i,j)\in \N^2}R_{i,j}\bs{w}(h+i,k+j)=0,~ 
\end{align}
where $R_{i,j}\in\R^{g\times q},~ g\in \N\setminus \{0\}$, and only finitely many $R_{i,j}$ are non-zero.
To express these equations compactly, we define forward shift operators $\sigma_t$ and $\sigma_s$ by: $(\sigma_t \bs{w})(h,k)= \bs{w}(h+1,k)$ and $(\sigma_s \bs{w})(h,k)=\bs{w}(h,k+1)$. Let $R(\sig)\coloneqq\sum\limits_{(i,j)\in \N^2} R_{i,j}\sigma_t^i\sigma_s^j\in \A^{g\times q}$, where $R_{i,j}\in\R^{g\times q},~ g\in \N\setminus \{0\}$ and only finitely many $R_{i,j}$ are non-zero. The corresponding behaviour $\B$ is the set of admissible trajectories satisfying \eqref{eq:system}, given by,  
\vspace{-2mm}
$$
\mathfrak{B}=\!\left\{\bs{w}\in (\R^q)^{\N^2} \mid R(\sig)\bs{w}= 0\right\}=\!\ker R(\sig).
$$
This is called {\em kernel representation} of $\B$ and $R(\sig)$ is called the \emph{kernel representation} matrix.
We next define two algebraic objects associated with a behaviour that we require in the paper.
Note that the kernel representation matrix is not unique to a behaviour. Given a kernel representation matrix $R(\sig)$, the associated \emph{equation module}, denoted by $\mathcal{R}$, is the row-span of $R(\sigp)$ over $\A$. As a consequence of $\R^{\N^2}$ being an injective co-generator module over $\A$, $\mathcal{R}$ is uniquely associated to $\B$ \cite[Theorem 61]{oberst1990multidimensional}. 
This unique correspondence allows us to assign a behaviour $\B$ to a given module $\mathcal{R}$; we use $\B(\mathcal{R})$ to denote this assignment. Thus, $\B(\mathcal{R})\coloneqq \{w\in (\R^q)^{\N^2}\mid r(\sigp)w=0~\forall~ r(\sigp)\in \mathcal{R}\}$.
 Another algebraic object related with $\mathcal{R}$ is the $\A$-module of equivalence classes, where two arbitrary elements $f(\sigp),g(\sigp)\in\poly{1\times q}$ are defined to be equivalent if $f(\sigp)-g(\sigp)\in\mathcal{R}$. This module is called the {\em quotient module} and is denoted by $\mathcal{M}$. The following well-known result, attributed to Malgrange \cite[Page 3]{oberst1990multidimensional}, is a cornerstone in algebraic systems theory and will be used later in this paper. Note that ${\rm Hom}_{\R}(\mathcal{M},\R)$ is the set of all $\R$-linear maps from $\mathcal{M}$ to $\R$, i.e., the dual space of the module $\mathcal{M}$ (viewed as vector space over $\mathbb{R}$).
\begin{proposition}\label{prop:malgrange}
	Let $\mathcal{R}\subseteq\poly{1\times q}$ be a given submodule. Define the quotient module $\mathcal{M}$ as above. Then
	\vspace{-2mm}
	$$
	\B(\mathcal{R})\simeq {\rm Hom}_{\R}(\mathcal{M},\R) \mbox{ as }\A\mbox{-modules}.
	$$
\end{proposition}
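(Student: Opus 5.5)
The plan is to build an explicit isomorphism in two stages. First, identify sequences on $\N^2$ with functionals on the polynomial module $\A^{1\times q}$. Then show that the kernel condition corresponds exactly to vanishing on $\mathcal{R}$.

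\textbf{Step 1: the duality pairing.} Use the monomial basis $\{\sigma_t^i\sigma_s^j e_l\}$ of $\A^{1\times q}$ over $\R$, where $e_l$ is the $l$-th standard row vector. To a trajectory $\bs{w}\in(\R^q)^{\N^2}$ associate the $\R$-linear map
\[
\phi_{\bs{w}}:\A^{1\times q}\to\R,\qquad \phi_{\bs{w}}(f)=(f(\sigp)\bs{w})(0,0).
\]
On basis elements this reads $\phi_{\bs{w}}(\sigma_t^i\sigma_s^j e_l)=\bs{w}_l(i,j)$. A linear functional is uniquely determined by arbitrary values on a basis, and these values are exactly the entries of $\bs{w}$. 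Hence $\bs{w}\mapsto\phi_{\bs{w}}$ is an $\R$-linear bijection $(\R^q)^{\N^2}\to{\rm Hom}_{\R}(\A^{1\times q},\R)$.

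\textbf{Step 2: module compatibility.} Give ${\rm Hom}_{\R}(\A^{1\times q},\R)$ the $\A$-action $(a\cdot\phi)(f)=\phi(fa)$. Then
\[
\phi_{a\bs{w}}(f)=(f\,a\,\bs{w})(0,0)=\phi_{\bs{w}}(fa),
\]
using commutativity of the shifts. So the bijection is $\A$-linear, and the shift action on trajectories matches the dual action.

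\textbf{Step 3: restricting to $\B(\mathcal{R})$.} This is the main point to check. We need $\bs{w}\in\B(\mathcal{R})$ if and only if $\phi_{\bs{w}}$ vanishes on $\mathcal{R}$.
\begin{itemize}
\item[($\Rightarrow$)] If $r\bs{w}=0$ for all $r\in\mathcal{R}$, then evaluating at $(0,0)$ gives $\phi_{\bs{w}}(r)=0$.
\item[($\Leftarrow$)] Suppose $\phi_{\bs{w}}$ vanishes on $\mathcal{R}$. Since $\mathcal{R}$ is an $\A$-submodule, $\sigma_t^h\sigma_s^k r\in\mathcal{R}$ for every $r\in\mathcal{R}$. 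Therefore
\[
(r\bs{w})(h,k)=(\sigma_t^h\sigma_s^k r\,\bs{w})(0,0)=\phi_{\bs{w}}(\sigma_t^h\sigma_s^k r)=0
\]
for every $(h,k)\in\N^2$, so $r\bs{w}=0$.
\end{itemize}
The key fact used is that $\mathcal{R}$ is closed under shifts; this is where the submodule hypothesis enters.

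\textbf{Step 4: passing to the quotient.} Functionals on $\A^{1\times q}$ that vanish on $\mathcal{R}$ correspond bijectively to functionals on $\mathcal{M}=\A^{1\times q}/\mathcal{R}$, by the universal property of the quotient. This correspondence is compatible with the $\A$-actions. Composing the maps from Steps 1–4 gives $\B(\mathcal{R})\simeq{\rm Hom}_{\R}(\mathcal{M},\R)$ as $\A$-modules.

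No step is deep. The only care needed is fixing the convention for the $\A$-module structure on the dual so that Step 2 holds on the nose, and using shift-closure of $\mathcal{R}$ in Step 3.
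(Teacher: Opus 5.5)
Your proof is correct and complete. The monomial basis gives the $\A$-linear identification $(\R^q)^{\N^2}\cong\mathrm{Hom}_\R(\A^{1\times q},\R)$ with the action $(a\phi)(f)=\phi(fa)$, which is the convention the paper leaves implicit. Shift-closure of $\mathcal{R}$ then shows that $\B(\mathcal{R})$ is exactly the annihilator of $\mathcal{R}$, and that annihilator is $\mathrm{Hom}_\R(\mathcal{M},\R)$. The paper gives no proof of its own and only cites Oberst, where the standard argument composes Malgrange's isomorphism $\B(\mathcal{R})\cong\mathrm{Hom}_\A(\mathcal{M},\R^{\N^2})$ with the same duality $\R^{\N^2}\cong\mathrm{Hom}_\R(\A,\R)$ and hom--tensor adjunction; your proof is essentially that argument written out as the single explicit map $\bs{w}\mapsto\bigl([f]\mapsto (f\bs{w})(0,0)\bigr)$.
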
 

In the sequel, to distinguish the two independent variables, we denote the indexing set by $\N^2=\mathbb{T}\times \mathbb{S}$, where $\mathbb{T}=\N$ and $\mathbb{S}=\N$. Hence, for every $(h,k)\in \N^2$, we have $h\in \mathbb{T}$ and $k\in \mathbb{S}$.
The restriction of a 2D trajectory $\bs{w}\in (\R^q)^{\mathbb{T}\times \mathbb{S}}$ to a subset $\mathcal{S}\subseteq\mathbb{T}\times \mathbb{S}$, denoted by $\bs{w}|_{\mathcal{S}} $, is the map
 $\bs{w}|_{\mathcal{S}}:\mathcal{S}\to\R$ defined by $\bs{w}|_{\mathcal{S}}(h,k)=\bs{w}(h,k)$ for all $(h,k)\in \mathcal{S}$. Accordingly, for a 2D behaviour $\B$ and a subset $\mathcal{S}\subseteq \mathbb{T}\times \mathbb{S}$, the \emph{$\mathcal{S}$-restricted behaviour $\B$}, denoted by $\B_{\mathcal{S}}$, is defined as: 
$
\B_{\mathcal{S}}\coloneqq\left\{\bs{w}|_{\mathcal{S}}:\mathcal{S}\to \R^q \bigm| \bs{w}\in \B\right\}.
$
In this paper, we focus on the restriction of a behaviour to \emph{vertical strips} in $\mathbb{T}\times \mathbb{S}$. Specifically, for $H\in \N$, we consider the strip
\begin{align}\label{eq:sh}
	\sh\subseteq \left\{ (h,k)\in \mathbb{T}\times \mathbb{S}\mid 0\leqslant h\leqslant H , ~k\in \mathbb{S}\right\}. 
\end{align}
 That is, we consider a finite interval in $\mathbb{T}$-direction, as illustrated in \Cref{fig:sh}. 
 
For $\bs{w}\in (\R^q)^{\mathbb{T}\times \mathbb{S}}$, the $\sh$-restricted trajectory $\bs{w}_{\sh}:\sh\to \R^q$ is given by $\bs{w}_{\sh}(h,k)=\bs{w}(h,k)$ for $(h,k)\in \sh$. Thus, the restricted trajectory $\bs{w}_{\sh}$ can be equivalently represented by the column vector ${\rm col}\left(\bs{w}(0,k),\bs{w}(1,k),\ldots,\bs{w}(H,k)\right)\in (\R^{q(H+1)})^\mathbb{S}$. Hence, restricting a trajectory $\bs{w}\in (\R^q)^{\mathbb{T}\times \mathbb{S}}$ to the vertical strip $\sh$ yields a 1D trajectory indexed by $\mathbb{S}$. Consequently, for a given behaviour $\B$, the $\sh$-restricted behaviour $\B_{\sh}$ can be interpreted as a set of 1D trajectories indexed by $\mathbb{S}$. Interestingly, it is shown in \cite[Lemma 1]{rocha1990} that $\Br{H}$ forms a 1D behaviour, i.e., it admits a kernel representation over $\A_s$. However, this result only establishes the existence of a 1D kernel representation matrix over $\A_s$. In the following section, we provide a new constructive proof and obtain the corresponding 1D representation of $\B_{\sh}$.

	\begin{figure}[h]
	\centering
	\begin{tikzpicture}[scale=0.4]
		\foreach \x in {0,1,...,7.5}
		\foreach \y in {0,1,...,7}
		{
			\draw[black!90,opacity=1] (\x,\y) circle (2.5pt);
		}
		\fill [fill=blue!70,opacity=0.2] (0,0) rectangle (3,7.5);
		\draw[->] (0,0)--(7.5,0) node[right]{$\mathbb{T}$};
		\draw[->] (0,0)--(0,7.5) node[left]{$\mathbb{S}$};
		\draw[-] (3,7.5)--(3,0);
		\node at (3,-0.4) {$\scalebox{0.8}{$H$}$};
	\end{tikzpicture}
	\caption{The shaded reason represents $\mathcal{S}_{0,H}$}
	\label{fig:sh}
\end{figure}


\section{Restriction of $\B$ to the vertical strip $\sh$}\label{sec:2Dto1D}
In this section, for a given 2D system $\mathfrak{B}$, we prove that its restriction to the set $\sh$, defined by \eqref{eq:sh}, namely $\Br{H}$, constitute a 1D behaviour and derive the corresponding representation.
Before presenting the result, we introduce some preliminary concepts. We first define the following subset of $\A$
 $$
 \A_s^{(H)}[\sigma_t]\coloneqq \A_s \oplus \sigma_t\A_s\oplus \ldots\oplus \sigma_t^H \A_s.
 $$
 That is, $\A_s^{(H)}[\sigma_t]$ is composed of all the operators in $\A$ whose $\sigma_t$-degree is at most $H$. Observe that $\A_s^{(H)}[\sigma_t]$ is naturally an $\A_s$-module. Similarly, $ (\A_s^{(H)}[\sigma_t])^{1\times q}$ denotes set of all the operators in $\A^{1\times q}$ whose $\sigma_t$-degree is at most $H$.
 
 Let $\B$ be a 2D system with equation module $\mathcal{R}$. We define $\mathcal{R}_H\coloneqq \mathcal{R}\cap (\A_s^{(H)}[\sigma_t])^{1\times q}$. In other words, $\mathcal{R}_H$ contains all the elements in the equation module $\mathcal{R}$ that have their $\sigma_t$-degree less than or equal to $H$.  
 Let $r(\sigp)\coloneqq 
 \tilde{r}_0(\sigma_s) +\sigma_t \tilde{r}_1(\sigma_s)+\cdots+\sigma_t^H \tilde{r}_H(\sigma_s)$ be an element in $\mathcal{R}_H$. We can associate a 1D operator $r_H(\sigma_s)\in \A_s^{1\times q(H+1)}$ with $r(\sigp)$ using the following map:
 \begin{align}\label{map:psi}
 	\begin{split}
 		\psi: (\A_s^{(H)}[\sigma_t])^{1\times q} &\to \A_s^{1\times q(H+1)}\\
 		r(\sigp)\mapsto& r_H(\sigma_s)\coloneqq \begin{bsmallmatrix}
 			\tilde{r}_0(\sigma_s)&\tilde{r}_1(\sigma_s)&\cdots&\tilde{r}_H(\sigma_s)
 		\end{bsmallmatrix}
 	\end{split}
 \end{align}
 It is obvious that $\psi$ is bijective and satisfies linearity over $\A_s$. The map $\psi$ extends naturally to matrices in $(\A_s^{(H)}[\sigma_t])^{g\times q}$ by acting row-wise.
 We are now ready to show that there exists  $P_H(\sigma_s)\in\A_s^{\bullet\times q(H+1)}$ such that the restriction $\Br{H}$ -- whose elements are identified as 1D trajectories $\bs{w}|_{\sh}(k)\coloneqq{\rm col}\left({w}(0,k),{w}(1,k),\ldots,{w}(H,k)\right)$ -- is equal to the behaviour $\ker P_H(\sigma_s)$, i.e., $\Br{H}=\ker P_H(\sigma_s)$. 
 
\begin{theorem}\label{lem: 2Dto1D}
	Let $\B$ be a 2D behaviour with the equation module $\mathcal{R}$. Let $H\in \N$ and $\sh$ be given by \eqref{eq:sh}. Let $\mathcal{R}_H\coloneqq \mathcal{R}\cap (\A_s^{(H)}[\sigma_t])^{1\times q}$. Then, there exists $P_H(\sigma_s)\in\A_s^{\bullet\times q(H+1)}$ such that the rows of $\psi^{-1}\left(P_H(\sigma_s)\right)$ form the basis of $\mathcal{R}_H$ as an $\A_s$-module. Moreover, $\Br{H}$ is a 1D behaviour such that $\Br{H}=\ker P_H(\sigma_s)$.
\end{theorem}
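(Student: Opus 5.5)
The proof splits into two parts: an algebraic statement (existence of a finite $\A_s$-basis of $\mathcal{R}_H$) and a behavioural statement ($\Br{H}=\ker P_H(\sigma_s)$).

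\emph{Algebraic part.} The set $(\A_s^{(H)}[\sigma_t])^{1\times q}$ is a free $\A_s$-module of rank $q(H+1)$, and $\psi$ is an $\A_s$-isomorphism onto $\A_s^{1\times q(H+1)}$. The set $\mathcal{R}_H$ is closed under addition and under multiplication by elements of $\A_s$, since multiplying by $\sigma_s$ does not raise the $\sigma_t$-degree. Hence $\mathcal{R}_H$ is an $\A_s$-submodule. Its image $\psi(\mathcal{R}_H)$ is therefore a submodule of a finitely generated free module over the PID $\A_s=\R[\sigma_s]$, so it is free of finite rank. Stacking a basis as the rows of $P_H(\sigma_s)$ gives the first claim.

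\emph{Inclusion $\Br{H}\subseteq\ker P_H$.} Take $\bs w\in\B$ and a row $p$ of $P_H$, and let $r=\psi^{-1}(p)=\sum_{i\le H}\sigma_t^i\tilde r_i(\sigma_s)\in\mathcal{R}$. Then $r\bs w=0$ on $\N^2$. Evaluating at $(0,k)$ gives $\sum_i \tilde r_i(\sigma_s)\bs w(i,\cdot)(k)=0$, and this is exactly $(p\,\bs w|_{\sh})(k)=0$.

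\emph{Inclusion $\ker P_H\subseteq\Br{H}$ (the main obstacle).} This is an extension problem: a 1D trajectory $v=\mathrm{col}(v_0,\dots,v_H)\in\ker P_H$ must be extended to some $\bs w\in\B$. I will use \Cref{prop:malgrange}. The trajectory $v$ defines an $\R$-linear functional $\phi_v$ on $(\A_s^{(H)}[\sigma_t])^{1\times q}$ by
\[
\phi_v(r)=\sum_{i=0}^{H}\big(\tilde r_i(\sigma_s)v_i\big)(0),
\]
that is, $\phi_v(\sigma_t^i\sigma_s^j e_l)=(v_i)_l(j)$. If $r\in\mathcal{R}_H$, then $\psi(r)=x(\sigma_s)P_H(\sigma_s)$ for some row vector $x$, so $\phi_v(r)=(x P_H v)(0)=0$. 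Thus $\phi_v$ vanishes on $\mathcal{R}_H=\mathcal{R}\cap(\A_s^{(H)}[\sigma_t])^{1\times q}$. It therefore descends to a well-defined functional on the $\R$-subspace $N\coloneqq\big((\A_s^{(H)}[\sigma_t])^{1\times q}+\mathcal{R}\big)/\mathcal{R}$ of $\mathcal{M}$, because $N\cong(\A_s^{(H)}[\sigma_t])^{1\times q}/\mathcal{R}_H$. Every linear functional on a subspace of a vector space extends to the whole space (choose a complement via a Hamel basis), so we obtain $\Phi\in\mathrm{Hom}_{\R}(\mathcal{M},\R)$ extending it.

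By \Cref{prop:malgrange}, $\Phi$ corresponds to some $\bs w\in\B$. Under the standard identification, $\bs w_l(h,k)=\Phi\big([\sigma_t^h\sigma_s^k e_l]\big)$, and $\bs w\in\B$ because $\Phi$ kills $\mathcal{R}$. For $h\le H$ this gives $\bs w_l(h,k)=(v_h)_l(k)$, so $\bs w|_{\sh}=v$.

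The care points are checking that the Malgrange isomorphism is indeed given by this explicit evaluation formula, and that $\phi_v$ is well defined on the image $N$. This is precisely where the hypothesis that $P_H$ generates all of $\mathcal{R}_H$, and not merely some of its elements, is used. This completes $\Br{H}=\ker P_H(\sigma_s)$.
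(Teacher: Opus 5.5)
Your proof is correct and takes the paper's route: $\psi$ gives finite generation over $\A_s$, evaluation at $(0,\cdot)$ gives $\subseteq$, and for $\supseteq$ a functional on $\mathcal{Q}=(\A_s^{(H)}[\sigma_t])^{1\times q}/\mathcal{R}_H\hookrightarrow\mathcal{M}$ is extended to $\mathcal{M}$ and pulled back to $\B$ through \Cref{prop:malgrange}. You make explicit, through the Hamel-basis extension and the evaluation formula, what the paper does with a commutative diagram, and your appeal to the PID property yields a genuine $\A_s$-basis, which slightly sharpens the paper's Noetherianity argument (that argument only gives finite generation).
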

\begin{proof}
	Via the map $\psi$, defined in \eqref{map:psi}, we identify the $\A_s$-module $(\A_s^{(H)}[\sigma_t])^{1\times q}$ with the free $\A_s$-module $\A_s^{1\times q(H+1)}$. 
	As a result, $(\A_s^{(H)}[\sigma_t])^{1\times q}$ is a Noetherian module over $\A_s$. Thus, the $\A_s$-submodule $\mathcal{R}_H\subseteq (\A_s^{(H)}[\sigma_t])^{1\times q}$ is finitely generated. It then follows that, there exists $P_H(\sigma_s)\in\A_s^{\bullet\times q(H+1)}$ such that the rows of $\psi^{-1}\left(P_H(\sigma_s)\right)$ form the basis for $\mathcal{R}_H$ as an $\A_s$-module. 
	We now show that $\Br{H}$ is a 1D behaviour given by $\Br{H}= \ker P_H(\sigma_s)$.

\textbf{($\boldsymbol{\Br{H}\subseteq \ker P_H(\sigma_s)}$)} Let $\bs{w}\in\B$ be arbitrary; we want to show that the 1D trajectory $\bs{w}|_{\sh}={\rm col}\left(w(0,\bullet),w(1,\bullet),\ldots,w(H,\bullet)\right)$ satisfies $P_H(\sigma_s)\bs{w}|_{\sh}=0$. For this purpose, let us consider an arbitrary row of $P_H(\sigma_s)$,
say $r(\sigma_s)=\begin{bsmallmatrix}
	\tilde{r}_0(\sigma_s)&\tilde{r}_1(\sigma_s)&\cdots&\tilde{r}_H(\sigma_s)
\end{bsmallmatrix}$. Then, we have
\begin{align*}
	\begin{split}
r(\sigma_s)\bs{w}|_{\sh}&=r(\sigs)\operatorname{col}\left(w(0,\bullet),w(1,\bullet),\ldots,w(H,\bullet\right)\\
		&=\left(\psi^{-1}\left(r(\sigma_s)\right)\bs{w}\right)(0,\bullet)\\
		&=0.
	\end{split}
\end{align*}
where the last equation follows from the fact that $\psi^{-1}(r(\sigma_s))\in\mathcal{R}_H\subseteq\mathcal{R}$ and $f(\sigp)\bs{w}\equiv 0$ for all $f(\sigp)\in\mathcal{R}$.

\textbf{($\boldsymbol{\Br{H}\supseteq \ker P_H(\sigma_s)}$)} We take an arbitrary 1D trajectory $\bs{v}\in\left(\R^{q(H+1)}\right)^{\mathbb{S}}$ satisfying $P_H(\sigma_s)\bs{v}=0$ and prove that there exists a $\bs{w}\in\B$ such that $\bs{w}|_{\sh}=\bs{v}$. We use \Cref{prop:malgrange} for this proof. \Cref{prop:malgrange}	applied to the 1D behaviour ${\rm ker~}P_H(\sigma_s)$ brings out the fact that ${\rm ker~}P_H(\sigma_s)$ is isomorphic as an $\A_s$-module to ${\rm Hom}_{\R}\left(\mathcal{Q},\R\right)$, where $\mathcal{Q}$ is the quotient module obtained by factoring $(\A_s^{(H)}[\sigma_t])^{1\times q}$ by $\mathcal{R}_H$, i.e.,
$\mathcal{Q}\coloneqq(\A_s^{(H)}[\sigma_t])^{1\times q}/\mathcal{R}_H$. It can be seen easily that $\mathcal{Q}\hookrightarrow\mathcal{M}$ as $\R$-vector spaces (see, e.g., \cite[Proposition 3.6]{MM:DP:SICON}). Therefore, utilising the exactness of the contravariant functor ${\rm Hom}_{\R}(\bullet, \R)$ we get that ${\rm Hom}_{\R}\left(\mathcal{M},\R\right)\twoheadrightarrow{\rm Hom}_{\R}\left(\mathcal{Q},\R\right)$. Note that, ${\rm Hom}_{\R}\left(\mathcal{M},\R\right)$ is isomorphic to $\B$ according to \Cref{prop:malgrange}. The following commutative diagram thus ensues
(see \cite[Proposition 3.6]{MM:DP:SICON} for more details).

$$
\begin{tikzcd}
	{\rm Hom}_{\R}\left(\mathcal{M},\R\right) \arrow[r, ->>]  & {\rm Hom}_{\R}\left(\mathcal{Q},\R\right)\arrow[d, "\cong"] \\
	\B \arrow[r] \arrow[u, "\cong"'] & \ker P_H(\sigma_s)
\end{tikzcd}
$$
Starting from $\bs{v}\in{\rm ker~}P_H(\sigma_s)$, we can chase the commutative diagram to obtain a $\bs{w}\in\B$ such that $\bs{w}|_{\sh}=\bs{v}$.
\end{proof}	

\section{Constructing a representation of $\Br{H}$ using Gr\"obner basis}\label{sec:GB}
In this section, for a given 2D system $\B$ with the equation module $\mathcal{R}$ and $H\in \N$, we apply \Cref{lem: 2Dto1D} to compute a kernel representation for the restricted 1D behaviour $\B_{\sh}$. By \Cref{lem: 2Dto1D}, we require the basis of the $\A_s$-module $\mathcal{R}_H= \mathcal{R}\cap (\A_s^{(H)}[\sigma_t])^{1\times q}$. However, given a kernel representation of $\B$, whose rows generate $\mathcal{R}$, obtaining a generating set for $\mathcal{R}_H$ is not straightforward. In general, $\mathcal{R}_H$ is not generated by those generators of $\mathcal{R}$ whose $\sigma_t$-degree is at most $H$, since a linear combination of higher-degree generators may cancel the higher powers of $\sigma_t$. Indeed, consider, for example, $\mathcal{R}=\langle \sigma_t\sigma_s^2, \sigma_t^2\sigma_s^2-\sigma_s^2\rangle$ and let $H=0$. Neither generators has $\sigma_t$-degree equal to $0$, nonetheless, $-\sigma_t(\sigma_t\sigma_s^2)+(\sigma_t^2\sigma_s^2-\sigma_s^2)=-\sigma_s^2\in \mathcal{R}_H$. Thus, for $H=0$, $\mathcal{R}_H$ is non-trivial even though none of the given generators belongs to $\mathcal{R}_H$. 

We, therefore, require a special set of generators for $\mathcal{R}$ that reveals the restriction. To obtain such generators, we introduce a specific term ordering, which we refer to as \emph{restriction revealing (RR)} term ordering (see \cite[Chapter 2, Section 2]{cox1994ideals} for the definition of the monomial ordering). With respect to this ordering, the desired generators of $\mathcal{R}$ are the Gr\"obner basis, defined in \Cref{def:GB} below. The RR term ordering is defined as follows: Let $e_i,~i=1,\ldots, q$ be the standard basis vectors in $\R^{1\times q}$. Let $i,j\in \{1,\ldots,q\}$, consider two monomials $\sigma_t^{h_1}\sigma_s^{k_1}e_i$ and $\sigma_t^{h_2}\sigma_s^{k_2}e_j$ in $\A^{1\times q}$. The RR term-ordering is defined as the Term-Over-Position ordering induced by the lexicographic monomial ordering with $\sigma_t\succ \sigma_s$. Specifically, $\sigma_t^{h_1}\sigma_s^{k_1}e_i\succ_{RR} \sigma_t^{h_2}\sigma_s^{k_2}e_j$ if either
\begin{itemize}
	\item $ \sigma_t^{h_1}\sigma_s^{k_1}\succ \sigma_t^{h_2}\sigma_s^{k_2}$ under the lexicographic term-ordering with $\sigma_t\succ \sigma_s$, i.e., $h_1>h_2,$ or $h_1=h_2$ and $k_1>k_2$; or
	\item $ \sigma_t^{h_1}\sigma_s^{k_1}= \sigma_t^{h_2}\sigma_s^{k_2}$ and $i<j$.
\end{itemize} 
We now introduce some terminology required for the definition of the Gr\"obner basis. Let $r(\sig)=\sum\limits_{(i,j)\in \N^2}\alpha_{ij}\sigma_t^i\sigma_s^je_i\in \A^{1\times q}$ where only finitely many $\alpha_{ij}$ are non-zero. The leading monomial of $r(\sig)$, denoted by $LM(r)$, is the largest monomial appearing in $r(\sig)$ with respect to $\succ_{RR}$. The leading coefficient, denoted by $LC(r)$, is the coefficient of $LM(r)$. The leading term, denoted by $LT(r)$, is defined by $LT(r)=LC(r)LM(r)$\cite{cox1994ideals}. 
\begin{definition}\label{def:GB}
	Let $\mathcal{R}\in \A^{1\times q}$ be a submodule. The leading module of $\mathcal{R}$ is defined as $LT(\mathcal{R})=\langle LT(r)\mid r\in \mathcal{R},~ r\neq 0 
\rangle_{\A}$. A finite subset $G\subseteq \mathcal{R}$ is called Gr\"obner basis of $\mathcal{R}$ if $\langle LT(G) \rangle_{\A}= LT(\mathcal{R})$ where $LT(G)=\{LT(g)\mid g\in G\}.$
\end{definition} 

In \Cref{lem:GB_intersection} below, we provide a technique to obtain a set that generates $\mathcal{R}_H$ as an $\A_s$-module using Gr\"obner basis with RR term ordering.
\begin{lemma}\label{lem:GB_intersection}
	Let $\B$ be a 2D behaviour with the equation module $\mathcal{R}\subseteq \A^{1\times q}$ and let $H\in \N$. Define $\mathcal{R}_H=\mathcal{R}\cap (\A_s^{(H)}[\sigma_t])^{1\times q}$. Let $G$ be the Gr\"obner basis of $\mathcal{R}$ with respect to the RR term ordering defined above. Define the set $G_H\coloneqq G\cap (\A_s^{(H)}[\sigma_t])^{1\times q}=\{g_1, g_2,\ldots, g_N\}$ where $N=|G_H|$. Let $d_i=\deg_{\sigma_t}(g_i)$ for $i=1,\ldots, N$. Further, define the set
	$
	\mathcal{G}_H=\left\{\sigma_t^jg_i \mid 0\leqslant j\leqslant H-d_i,~ 1\leqslant i\leqslant N \right\}.
	$
	Then, $\mathcal{R}_H$ equals the $\A_s$-module generated by $\mathcal{G}_H$.
\end{lemma}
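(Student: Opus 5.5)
The plan is to prove the two inclusions separately. The inclusion $\langle \mathcal{G}_H\rangle_{\A_s}\subseteq \mathcal{R}_H$ is immediate. Each $g_i\in G\subseteq\mathcal{R}$, so $\sigma_t^j g_i\in\mathcal{R}$ because $\mathcal{R}$ is an $\A$-module. Since $\deg_{\sigma_t}(\sigma_t^j g_i)=j+d_i\leqslant H$, each element of $\mathcal{G}_H$ lies in $(\A_s^{(H)}[\sigma_t])^{1\times q}$. That set is an $\A_s$-module (multiplication by $\sigma_s$ does not raise the $\sigma_t$-degree), and hence so is $\mathcal{R}_H$. It therefore contains every $\A_s$-combination of elements of $\mathcal{G}_H$.

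For the reverse inclusion, I would use the multivariate division algorithm with respect to $\succ_{RR}$. Take $r\in\mathcal{R}_H$. Since $G$ is a Gr\"obner basis and $r\in\mathcal{R}$, dividing $r$ by $G$ leaves remainder zero, so
$$r=\sum_{g\in G} a_g\, g,$$
where $a_g\in\A$ and $LM(a_g g)\preceq_{RR} LM(r)$ whenever $a_g g\neq 0$ (the standard representation from \cite{cox1994ideals}).

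The key property of $\succ_{RR}$ is that it is a TOP ordering induced by lex with $\sigma_t\succ\sigma_s$. Hence the leading monomial of any vector has the largest $\sigma_t$-exponent occurring in it, so the $\sigma_t$-exponent of $LM(f)$ equals $\deg_{\sigma_t} f$. From this I would derive two facts.

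First, if $a_g g\neq0$ then $\deg_{\sigma_t}(a_g g)\leqslant \deg_{\sigma_t} r\leqslant H$. This forces $\deg_{\sigma_t} g\leqslant H$, so only elements $g\in G_H$ occur in the representation.

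Second, since $\A_s[\sigma_t]$ has no zero divisors in the coefficient sense, the $\sigma_t$-degree is additive: $\deg_{\sigma_t}(a_g g_i)=\deg_{\sigma_t}a_g+d_i$. To justify this, note that the top $\sigma_t$-coefficient of $a_g g_i$ is the product of a nonzero polynomial in $\A_s$ with a nonzero vector in $\A_s^{1\times q}$, which is nonzero. Hence $\deg_{\sigma_t}a_g\leqslant H-d_i$. Writing $a_g=\sum_{j=0}^{H-d_i} c_j(\sigma_s)\sigma_t^j$ then gives $a_g g_i=\sum_j c_j(\sigma_s)\,\sigma_t^j g_i$, which is an $\A_s$-combination of elements of $\mathcal{G}_H$. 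This completes the reverse inclusion.

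The main obstacle is making the degree bound rigorous. I need the standard-representation property $LM(a_g g)\preceq LM(r)$, and I need the observation that $\succ_{RR}$ compares $\sigma_t$-exponents first, so that $\preceq$ on leading monomials implies $\leqslant$ on $\sigma_t$-degrees. Both should be stated carefully; everything after that is bookkeeping.
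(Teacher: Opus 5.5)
Your proof is correct and follows essentially the paper's route: the division algorithm for the RR ordering, plus the observation that under a TOP ordering induced by lex with $\sigma_t\succ\sigma_s$ the leading monomial carries the $\sigma_t$-degree, so that $LT(\alpha_i g_i)\preceq_{RR} LT(f)$ forces $\deg_{\sigma_t}\alpha_i\leqslant H-d_i$. The only difference is minor: you divide by all of $G$ and use the degree bound to show that the coefficients of elements of $G\setminus G_H$ vanish, whereas the paper divides by $G_H$ alone and shows the remainder is zero by contradiction with the Gr\"obner property; your explicit check that the $\sigma_t$-degree is additive also supplies the justification the paper leaves implicit for the final degree bound.
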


\begin{proof}
$\textbf{($\langle\mathcal{G}_H\rangle_{\A_s}\subseteq \mathcal{R}_H$):}$
By definition every $\sigma_t^jg_i\in\mathcal{G}_H$, $i\in \{1,\ldots,N\}$ satisfies $\deg_{\sigma_t} \sigma_t^jg_i\leqslant d_i+j$. Since $ 0\leqslant j\leqslant H-d_i$, it then follows that $\deg_{\sigma_t} \sigma_t^jg_i\leqslant H$, which implies $\mathcal{G}_H\subseteq (\A_s^{(H)}[\sigma_t])^{1\times q}$. Thus, $\mathcal{G}_H\subseteq \mathcal{R}\cap (\A_s^{(H)}[\sigma_t])^{1\times q}= \mathcal{R}_H$. Since, $\mathcal{R}_H$ is an $\A_s$-module, $\langle\mathcal{G}_H\rangle_{\A_s}\subseteq \mathcal{R}_H$.

$\textbf{($\mathcal{R}_H\subseteq \langle\mathcal{G}_H\rangle_{\A_s}$):}$
Let $f\in \mathcal{R}_H$ be arbitrary. It suffices to show that $f$ can be expressed as an $\A_s$-linear combination of elements of $\mathcal{G}_H$.
By multivariate division algorithm \cite[Chapter 5, Theorem 2.5]{cox2006using}, reducing $f$ by the set $G_H=\{g_1,\ldots,g_N\}$ yields
\begin{align}\label{eq:divisionalgo}
	\begin{split}
		f=\sum_{i=1}^{N}\alpha_i g_i+r
	\end{split}
\end{align} 
where $\alpha_i\in \A$, $r\in \A^{1\times q}$, and the following properties hold:
\begin{itemize}
	\item[(i)] $LT(\alpha_i g_i)\preceq_{RR}  LT(f)$ for all $i=1,\ldots, N$.
	\item[(ii)] Either $r=0$ or $r$ is $\R$-linear combination of monomials none of which is divisible by any of $LT(g_i)$, $i=1,\ldots,N$.
\end{itemize} 

We claim that $r=0$. Suppose for the contradiction that $r\neq 0$. Note that we have $r=f-\sum_{i=1}^{N}\alpha_i g_i$. Since $f\in \mathcal{R}$ and $g_i\in \mathcal{R}$ for all $i$, this implies $r\in \mathcal{R}$. Since $G$ is Gr\"obner basis of $\mathcal{R}$, by \Cref{def:GB}, the leading term $LT(r)$ must be divisible by $LT(g)$ for some $g\in G$. 
By the second property of \eqref{eq:divisionalgo}, $LT(r)$ cannot be divisible by $LT(g)$ for any $g\in G_H$. Furthermore, by the first property of the division algorithm, we have $LT(r)\preceq_{RR} \max\{LT(f), LT(\alpha_1g_1),\ldots,LT(\alpha_Ng_N) \}= LT(f)$. Note that in RR term ordering, the $\sigma_t$-degree of any vector $p\in \A^{1\times q}$ is same as $\sigma_t$-degree of $LT(p)$, i.e., $\deg_{\sigma_t} p= \deg_{\sigma_t}(LT(p))$. Since $f\in \mathcal{R}_H$, we have $\deg_{\sigma_t} f=\deg_{\sigma_t}(LT(f))\leqslant H$. Thus, $\deg_{\sigma_t} r=\deg_{\sigma_t}(LT(r))\leqslant H$.  Since for any element $g\in G\setminus G_H$ it must hold that $\deg_{\sigma_t}g>H$, this implies that $LT(r)$ is not divisible by the leading term of any element in $G\setminus G_H$. Thus, $r\in \mathcal{R}$, while $LT(r)$ is not divisible by $LT(g)$ for all $g\in G$, contradicting that $G$ is a Gr\"obner basis of $\mathcal{R}$. Hence, $r=0$.

Now, since $LT(\alpha_i g_i)\preceq_{RR}  LT(f)$ for each $i$, we have $\alpha_i\in \A_s^{(H-d_i)}[\sigma_t]$. That is for $\beta_{ij}\in \A_s$, we can write $\alpha_i=\sum_{j=0}^{H-d_i}\beta_{ij}\sigma_t^j$. Substituting this back into the division equation with $r=0$, we have
$ f=\sum\limits_{\substack{1 \leqslant i \leqslant N \\ 0 \leqslant j \leqslant H-d_i}}
 \beta_{ij} \sigma_t^jg_i,~ \beta_{ij}\in \A_s$. This proves that $f$ is an $\A_s$-linear combination of elements of $\mathcal{G}_H$, completing the proof.
\end{proof}	
\Cref{algo:Bsh} provides algorithm to compute a kernel representation for $\Br{H}$ using \Cref{lem:GB_intersection} and \Cref{lem: 2Dto1D}.
\begin{algorithm}[h]
	\caption{Calculation of kernel representation of $\B_{\sh}$}
	\label{algo:Bsh}
	\textbf{Input:} A 2D behaviour $\B=\ker R(\sig)$ and $H\in \N$\\
	\textbf{Output:} A 1D representation of the restriction $\B_{\sh}$\\
	\begin{enumerate}
		\item Calculate Gr\"obner basis $G$ of the module $\mathcal{R}=\mathrm{rowspan}_{\A}R(\sig)$ with respect to RR term ordering.
		\item Obtain the set $G_H=G\cap \A_s^{(H)}[\sigma_t])^{1\times q}=\{g_1, g_2,\ldots, g_N\}$ where $N=|G_H|$. 
		\item For $d_i=\deg_{\sigma_t}(g_i)$ for $i=1,\ldots, N$, define 
		$
		\mathcal{G}_H=\left\{\sigma_t^jg_i \mid 0\leqslant j\leqslant H-d_i,~ 1\leqslant i\leqslant N \right\}.
		$
		\item Define $P(\sigma_s)= \psi (\mathrm{col}\{\mathcal{G}_H\})$. Then $\B_{\sh}=\ker P(\sigma_s)$.
	\end{enumerate}
\end{algorithm}
\begin{theorem}
	Let $\B$ be a 2D behaviour and let $H\in \N$. Let $\sh$ be given by \eqref{eq:sh}. Then, the output of \Cref{algo:Bsh} is a kernel representation of the restricted 1D behaviour $\B_{\sh}$.
\end{theorem}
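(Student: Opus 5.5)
The proof combines \Cref{lem:GB_intersection} with \Cref{lem: 2Dto1D}. The one subtlety is that \Cref{lem: 2Dto1D} is phrased in terms of a \emph{basis} of $\mathcal{R}_H$, whereas \Cref{algo:Bsh} only produces a \emph{generating set} $\mathcal{G}_H$.

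First, by \Cref{lem:GB_intersection}, applied to the Gr\"obner basis $G$ computed in Step~1 with respect to the RR ordering, the finite set $\mathcal{G}_H$ from Step~3 generates $\mathcal{R}_H$ as an $\A_s$-module. Since $\psi$ is an $\A_s$-linear bijection, $\mathrm{rowspan}_{\A_s}P(\sigma_s)=\psi(\mathcal{R}_H)$, where $P(\sigma_s)=\psi(\mathrm{col}\{\mathcal{G}_H\})$.

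Second, I would show that only this row-span matters, not whether the rows form a basis. Take $P_H(\sigma_s)$ as provided by \Cref{lem: 2Dto1D}, so that $\psi^{-1}$ of its rows generate $\mathcal{R}_H$ and $\B_{\sh}=\ker P_H(\sigma_s)$. Both $P$ and $P_H$ then have row-span $\psi(\mathcal{R}_H)$ over $\A_s$. Consequently there exist polynomial matrices $X(\sigma_s)$ and $Y(\sigma_s)$ over $\A_s$ with $P=XP_H$ and $P_H=YP$, and this gives $\ker P_H\subseteq\ker P$ and $\ker P\subseteq \ker P_H$. Hence $\ker P(\sigma_s)=\ker P_H(\sigma_s)=\B_{\sh}$, which is exactly the claim.

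Alternatively, one can rerun the two inclusions from the proof of \Cref{lem: 2Dto1D} directly with $P$ in place of $P_H$. Both inclusions use only the facts that the rows of $\psi^{-1}(P)$ lie in $\mathcal{R}_H$ and that they generate it. The ``$\subseteq$'' direction uses that every row of $\psi^{-1}(P)$ annihilates every $\bs{w}\in\B$. The ``$\supseteq$'' direction uses that the quotient $(\A_s^{(H)}[\sigma_t])^{1\times q}/\mathcal{R}_H$ is the module whose dual is $\ker P$, which follows from \Cref{prop:malgrange} once $\mathrm{rowspan}_{\A_s}P=\psi(\mathcal{R}_H)$.

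The only point requiring care is this basis-versus-generating-set issue. It is harmless because kernels depend only on the row module. I would also note two minor points. Step~1 terminates, since Gr\"obner bases of submodules of $\A^{1\times q}$ exist and are computable by Buchberger's algorithm. Step~3 yields a finite $\mathcal{G}_H$, since $G_H$ is finite. If $G_H=\emptyset$, then $\mathcal{R}_H=0$, and one takes $P$ to be a zero row, which gives $\B_{\sh}=(\R^{q(H+1)})^{\mathbb{S}}$, consistent with \Cref{lem: 2Dto1D}.
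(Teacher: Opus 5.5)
Your proposal is correct and follows the paper's proof. \Cref{lem:GB_intersection} shows that $\mathcal{G}_H$ generates $\mathcal{R}_H$ over $\A_s$, and \Cref{lem: 2Dto1D} then gives $\B_{\sh}=\ker P(\sigma_s)$; your extra step, that only the $\A_s$-row-span of $P(\sigma_s)$ matters so a generating set suffices in place of a ``basis'', makes explicit a point the paper's two-line proof leaves implicit (the proof of \Cref{lem: 2Dto1D} itself only uses generation).
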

\begin{proof}
	By \Cref{lem:GB_intersection}, in step $3)$ of \Cref{algo:Bsh}, we obtain $\mathcal{G}_H$ that generates $\mathcal{R}_H$. Thus, by \Cref{lem: 2Dto1D}, $P(\sigma_s)$, obtained in Step $4)$, satisfies $\B_{\sh}=\ker P(\sigma_s)$.
\end{proof}	
We illustrate \Cref{algo:Bsh} using the following example.
\begin{example}
	Consider a 2D behaviour $\B$ given by
	$\B=\ker \begin{bsmallmatrix}
		\sigma_t^2\sigma_s-\sigma_s&0\\
		\sigma_t^3+3&\sigma_t^2+\sigma_s
	\end{bsmallmatrix}$. Applying \Cref{algo:Bsh}, we calculate the Gr\"obner basis with respect to RR term ordering as $G=\left\{\begin{bsmallmatrix}
	\sigma_t^2\sigma_s-\sigma_s&~0
\end{bsmallmatrix},\begin{bsmallmatrix}
\sigma_t\sigma_s+3\sigma_s&~\sigma_t^2\sigma_s+\sigma_s^2
\end{bsmallmatrix},\begin{bsmallmatrix}
\sigma_t^3+3&~\sigma_t^2+\sigma_s
\end{bsmallmatrix} \right\}$. Let $H=2$, then we obtain that $G_H=\left\{\begin{bsmallmatrix}
\sigma_t^2\sigma_s-\sigma_s&~0
\end{bsmallmatrix},\begin{bsmallmatrix}
\sigma_t\sigma_s+3\sigma_s&~\sigma_t^2\sigma_s+\sigma_s^2
\end{bsmallmatrix} \right\}$. Since all elements of $G_H$ has $\sigma_t$-degree equal to $H$, $\mathcal{G}_H=G_H$. Thus, $P_H(\sigma_s)=\psi (\mathrm{col}\{\mathcal{G}_H\})=\left[\begin{array}{cc|cc|cc}
-\sigma_s&0&0&0&\sigma_s&0\\
3\sigma_s&\sigma_s^2&\sigma_s&0&0&\sigma_s
\end{array}\right]$ and $\Br{H}=\ker P_H(\sigma_s)$.	
\end{example}

\section{Application to time-controllability and zero-time-controllability}\label{sec:TC-ZTC}
In several engineering applications, one of the independent variables represents time, thereby defining a natural notion of causality. This motivates the notion of time-controllability and zero-time-controllability introduced in \cite{valcher_bisiacco2013zero}, which account for the fact that one independent variable is time. It is shown that a 2D behaviour $\B$ admits an admissible deadbeat controller (DBC)\footnote{A DBC for $\B$ is admissible if its sufficient delayed version drives any behaviour trajectory to zero, while preserving its values on a sufficiently large initial strip (see \cite[Section 6]{valcher_bisiacco2013zero} for more details)} if and only if $\B$ is zero-time-controllable \cite[Theorem 6.4]{valcher_bisiacco2013zero}. Consequently, it is important to determine whether a given behaviour is time-controllable or zero-time-controllable. In this section, we use the results developed thus far to derive an algorithm to test these properties in scalar autonomous 2D systems.

Note that a 2D behaviour is autonomous if and only if the kernel representation matrix is a full column rank matrix\cite[Proposition 2.2]{fornasini1993state}. Therefore, in this section, we restrict our attention to behaviours of the form $\B=\ker R(\sig)\subseteq \R^{\N^2}$ where for $g\in \N$, $R(\sig)\in \R^g[\sig]$ is non-zero.

Unlike standard controllability (see \cite[Definition 2]{rocha1991controllability}), which is characterised by the ability to patch two arbitrary trajectories on two distinct sufficiently distant sets of $\N^2$, time-controllability requires patching on two specific subsets of $\N^2$ : an initial vertical strip and a sufficiently distant half plane. More precisely, for  $N,L\in \N$, these sets are 
\begin{align*}
	\begin{split}
		\mathcal{S}_{0,N-1}&=\left\{(h,k)\in \mathbb{T}\times \mathbb{S} \mid 0\leqslant h\leqslant N-1, k\in \mathbb{S} \right\}\\
		\mbox{and }\mathcal{S}_{N+L\to}&=\left\{(h,k)\in \mathbb{T}\times \mathbb{S}\mid h\geqslant N+L, k\in \mathbb{S} \right\}
	\end{split}
\end{align*}
 as illustrated in \Cref{fig:SNSNL}. Zero-time-controllability requires that an arbitrary trajectory can be patched with a zero trajectory on these sets. We provide a formal definitions below \cite[Definition 4.1]{valcher_bisiacco2013zero}.
\begin{definition}
	An 2D behaviour $\B$ is said to be 
	\begin{itemize}
		\item[(i)] 	\emph{time-controllable} if there exists $L\in \N$ such that for every $N\in \N$ and every pair of $\bs{w},\overline{\bs{w}}\in \B$, one can find $\bs{w}^*\in \B$ such that
		\begin{align}
			\bs{w}^*(h,k)&=\bs{w}(h,k) &\forall~ (h,k)\in \mathcal{S}_{0,N-1}\notag\\
			\bs{w}^*(h,k)&=\overline{\bs{w}}(h-N-L,k) &\forall~ (h,k)\in \mathcal{S}_{N+L\to}\notag
		\end{align}
		i.e., $\bs{w}^*|_{\mathcal{S}_{0,N-1}}=\bs{w}|_{\mathcal{S}_{0,N-1}}$ and $\sigma_t^{N+L}\bs{w}^*=\overline{\bs{w}}$.
		\item[(ii)] \emph{zero-time-controllable} if there exists $L\in \N$, such that for every $N\in \N$ and every $\bs{w}\in \B$, one can find $\bs{w}^*\in \B$ such that
		\begin{align}
			\bs{w}^*(h,k)&=\bs{w}(h,k) &\forall~ (h,k)\in \mathcal{S}_{0,N-1}\notag\\
			\bs{w}^*(h,k)&=0 &\forall~ (h,k)\in \mathcal{S}_{N+L\to}\notag
		\end{align}
		i.e., $\bs{w}^*|_{\mathcal{S}_{0,N-1}}=\bs{w}|_{\mathcal{S}_{0,N-1}}$ and $\sigma_t^{N+L}\bs{w}^*=0$.
	\end{itemize}	
\end{definition}

	\begin{figure}[h]
	\centering
	\begin{tikzpicture}[scale=0.4]
		\foreach \x in {0,1,...,6}
		\foreach \y in {0,1,...,7}
		{
			\draw[black!90,opacity=1] (\x,\y) circle (2.5pt);
		}
		\fill [fill=blue!70,opacity=0.2] (0,0) rectangle (2,7.3);
		\draw[->] (0,0)--(6.5,0) node[right]{$\mathbb{T}$};
		\draw[->] (0,0)--(0,7.5) node[left]{$\mathbb{S}$};
		\draw[-] (2,7.5)--(2,0);
		\node at (1.9,-0.4) {$\scalebox{0.8}{$N-1$}$};
		\shade [left color=blue!70, right color=blue!10, opacity=0.2] (4,0) rectangle (6.3,7.3);
		\draw[-] (4,7.5)--(4,0);
		\node at (4.2,-0.4) {$\scalebox{0.8}{$N+L$}$};
        \draw[->]        (5.3,4)   to[out=70,in=170] (6.9,4.5) node[right]{$\mathcal{S}_{N+L\to}$};
         \draw[<-]        (-0.7,4.5)node[left]{$\mathcal{S}_{0,L-1}$}   to[out=10,in=90] (0.7,4) ;
	\end{tikzpicture}
	\caption{Illustration of sets $\mathcal{S}_{0,N-1}$ and $\mathcal{S}_{N+L\to}$}
	\label{fig:SNSNL}
\end{figure}

%

It is immediate from the definitions that every time-controllable behaviour is also zero-time-controllable; the converse, however, does not hold. Moreover, every controllable behaviour is time-controllable and hence zero-time-controllable. More importantly, every time-controllable (zero-time-controllable) 2D system can be decomposed into the controllable part and a time-controllable (zero-time-controllable) autonomous part \cite[Theorem 5.1]{valcher_bisiacco2013zero}. Thus, a system with a non-trivial autonomous part may still be time-controllable/zero-time-controllable. This is consistent with the fact that time-controllability is a weaker property than controllability, requiring patching only on special sets. In fact, even for 1D systems, autonomous nilpotent systems are zero-controllable. Thus, their study in scalar autonomous 2D systems provides a crucial starting point for the study in the general system.

The following algebraic characterisation of both time-controllability and zero-time-controllability for scalar autonomous 2D systems follows directly from the corresponding characterisation in \cite[Lemma 4.2]{valcher_bisiacco2013zero} for scalar autonomous systems with kernel representation matrix in $\R[\sig]$. This result forms the basis of the algorithm for testing these properties presented later in this section.

\begin{proposition}\label{prop:TC_scalar}
	Let $\B$ be a 2D scalar autonomous behaviour given by $\B=\ker R(\sig)$ where $R(\sig)\in\R^g[\sig]$. Then,
	\begin{itemize}
		\item[(i)] $\B$ is time-controllable if and only if there exists $\delta(\sigma_s)\in\A_s$ such that $\B=\ker \delta(\sigma_s)$.	
		\item[(ii)] $\B$ is zero-time-controllable if and only if there exists $r\in \N$ and $\delta(\sigma_s)\in \A_s$ such that $\B=\ker \sigma_t^r\delta(\sig)$. 
	\end{itemize}
	
\end{proposition}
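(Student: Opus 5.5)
The plan is to derive the statement from \cite[Lemma 4.2]{valcher_bisiacco2013zero}, which characterises time-controllability and zero-time-controllability of a scalar autonomous behaviour $\B=\ker p(\sig)$ with a single polynomial $p\in\A$. The only extra work is the passage from a single polynomial to a column $R(\sig)\in\R^g[\sig]$, whose rows generate an ideal $\mathcal{R}\subseteq\A$. Since $\A$ is Noetherian but not a PID, $\mathcal{R}$ need not be principal. Hence we cannot simply take a gcd. We must argue that each of the two properties forces $\mathcal{R}$ to be principal, with a generator of the stated form.

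For the ``if'' directions, the hypothesis itself hands us a one-row representation. In (i), $\B=\ker\delta(\sigma_s)$, and in (ii), $\B=\ker\sigma_t^r\delta(\sigma_s)$. The cited lemma, applied to this single polynomial, then gives time-controllability, respectively zero-time-controllability. A direct check is also easy. For $\ker\delta(\sigma_s)$ the constraint acts only along $\mathbb{S}$, separately for each fixed $h$. We can therefore define $\bs{w}^*$ columnwise as $\bs{w}$ for $h\le N-1$, as $\overline{\bs{w}}$ shifted for $h\ge N+L$, and arbitrarily in $\ker\delta$ in between, with $L=0$. For $\sigma_t^r\delta$, the columns with $h\ge r$ are unconstrained, so we may take $L=r$ and set them to zero.

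For the ``only if'' directions, I would first show that time-controllability or zero-time-controllability forces the ideal to have codimension at most one, i.e. $\mathcal{R}=p\cdot\mathcal{J}$ with $\mathcal{J}$ zero-dimensional. Write $\mathcal{R}=p\mathcal{J}$ with $p=\gcd$ of the entries of $R$. The key observation concerns $\mathcal{J}\neq\A$. In that case $\B$ contains trajectories of the form $\ker p\cap\ker(\cdot)$, localised to the finitely many exponential modes of $\A/\mathcal{J}$ that are not killed by $p$. I would argue that the patching requirement with the zero trajectory (or with $\overline{\bs{w}}$) cannot be met for these modes unless they are already annihilated by a power of $\sigma_t$. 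Such modes are absorbed into $p$ after replacing $p$ by $\sigma_t^r$ times a suitable factor.

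The cleaner route, which I would actually take, is to rely on the paper's reduction. Time-controllability and zero-time-controllability depend only on $\B$, hence only on $\mathcal{R}$, because $\mathcal{R}$ is uniquely determined by $\B$ \cite[Theorem 61]{oberst1990multidimensional}. The proof of \cite[Lemma 4.2]{valcher_bisiacco2013zero} for the polynomial case goes through the module $\mathcal{M}=\A/\mathcal{R}$. The properties are shown there to be equivalent to the restriction map $\B\to\B_{\mathcal{S}_{0,N-1}}$ being compatible with shifting. That argument uses only $\mathcal{R}$ being an ideal, and it yields $\mathcal{R}=\langle\delta(\sigma_s)\rangle$, respectively $\mathcal{R}=\langle\sigma_t^r\delta(\sigma_s)\rangle$.

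The main obstacle is verifying this last claim. One must check that every step of the cited proof survives for non-principal ideals, and in particular that the step producing a generator in $\A_s$ (via elimination of $\sigma_t$) does not use principality. If it does, I would fall back on the gcd decomposition above and handle the zero-dimensional factor $\mathcal{J}$ directly via Proposition~\ref{prop:malgrange}.
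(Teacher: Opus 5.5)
\textbf{Gap: the reduction from a column $R$ to a principal ideal.} The step you flag as ``the main obstacle'' is a genuine gap, and it is the only real content once \cite[Lemma 4.2]{valcher_bisiacco2013zero} is taken for granted. For a column $R$, the ideal $\mathcal{R}$ need not be principal, and neither of your routes shows that TC or ZTC makes it principal. The gcd route only asserts that the modes coming from $\A/\mathcal{J}$ can be patched only if a power of $\sigma_t$ kills them, and that they are then ``absorbed into $p$''. The second route is only a hope that the cited proof never uses principality. Also, the factorisation $\mathcal{R}=p\mathcal{J}$ with $\A/\mathcal{J}$ finite-dimensional holds for every nonzero ideal of $\A$, so TC or ZTC does not ``force'' it. The paper gives you nothing more here: it simply asserts that the column case follows directly from the cited lemma.

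\textbf{For (ii) the gap cannot be closed, because the statement is false for non-principal $\mathcal{R}$.} Take $\B=\ker\,\mathrm{col}(\sigma_t,\sigma_s-1)$. Its trajectories are $\bs{w}(0,k)=c$ and $\bs{w}(h,k)=0$ for $h\geqslant 1$, so $\dim_{\R}\B=1$. This $\B$ is ZTC: take $L=1$ and $\bs{w}^*=\bs{w}$ for every $N$. More generally, any $\B$ with $\sigma_t^r\in\mathcal{R}$ is ZTC with $L=r$. Yet $\ker\sigma_t^r\delta(\sigma_s)$ is never one-dimensional. For $r\geqslant 1$ the column $h=0$ is unconstrained. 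For $r=0$ the columns range independently over $\ker\delta(\sigma_s)$, which gives either $\{0\}$ or an infinite-dimensional space. So the $\sigma_t$-nilpotent modes you planned to absorb are exactly what breaks the `only if' part of (ii). That part can only be proved under the extra hypothesis that $\mathcal{R}$ is principal, i.e.\ in the setting of the cited lemma.

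\textbf{Statement (i) does extend, and your gcd route can be completed.} Take $0\neq r=\sum_{j=0}^{D}\sigma_t^j r_j(\sigma_s)\in\mathcal{R}$. Apply TC with $\bs{w}=0$ and $N\geqslant D$; this gives a $\bs{w}^*$ vanishing on the first $N$ columns with $\sigma_t^{N+L}\bs{w}^*=\overline{\bs{w}}$. The equation $r\bs{w}^*=0$ then gives, inductively, $r_D(\sigma_s)^{i+1}\bs{w}^*(N+i,\cdot)=0$. Hence every column of every trajectory lies in $\ker r_D(\sigma_s)^{L+1}$, so $0\neq r_D^{L+1}\in\mathcal{R}\cap\A_s$, and therefore $p\in\A_s$. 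Since $p(\sigma_s)$ acts within columns, it maps $\B$ onto $\B(\mathcal{J})$ (surjectivity by \Cref{prop:malgrange}) and carries TC along. A finite-dimensional TC behaviour is $\{0\}$: apply TC with $\bs{w}=0$ and a strip wide enough that restriction to it is injective. Hence $\mathcal{J}=\A$ and $\B=\ker p(\sigma_s)$.

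\textbf{Minor slip.} In your direct check of (ii) the roles of the columns are reversed. In $\ker\sigma_t^r\delta(\sigma_s)$ the columns with $h<r$ are free, and those with $h\geqslant r$ must lie in $\ker\delta(\sigma_s)$. Zero-patching still works, even with $L=0$.
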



The following result is an immediate consequence of \Cref{prop:TC_scalar}. It shows that every admissible trajectory of a time-controllable (zero-time-controllable) system can be obtained from a corresponding 1D system. This observation is particularly useful for trajectory data generation, as the 1D system can be implemented in place of the original 2D system, thereby reducing computational complexity.  

\begin{lemma}\label{lem:dataTC}
	Let $\B$ be a scalar autonomous 2D system. For $t\in \N$, let $\mathcal{S}_t=\left\{(t,k)| k\in \mathbb{S} \right\} $.
	\begin{itemize}
		\item[(i)] $\B$ is time-controllable if and only if there exists $\delta(\sigma_s)\in \A_s$ such that for all $\bs{w}\in \B$, the restriction $\bs{w}|_{\mathcal{S}_t}$ satisfies $\bs{w}|_{\mathcal{S}_t}\in \ker \delta(\sigma_s)$ for all $t\in \N$.
		\item[(ii)]  $\B$ is zero-time-controllable if and only if there exists $r\in \N$ and $\delta(\sigma_s)\in \A_s$ such that for all $\bs{w}\in \B$, the restriction $\bs{w}|_{\mathcal{S}_t}\in \R^{\N}$ for $t< r$ and  $\bs{w}|_{\mathcal{S}_t}\in \ker \delta(\sigma_s)$ for all $t\geqslant r$.
	\end{itemize}
	
\end{lemma}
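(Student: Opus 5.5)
The plan is to reduce both parts to \Cref{prop:TC_scalar}, which characterises time-controllability and zero-time-controllability by membership of $\delta(\sigma_s)$ or $\sigma_t^r\delta(\sigma_s)$ in the equation module. The key tool is a dictionary between $\delta(\sigma_s)\bs{w}=0$ on $\N^2$ and the line-wise condition $\bs{w}|_{\mathcal{S}_t}\in\ker\delta(\sigma_s)$. Since $\delta(\sigma_s)$ involves only shifts in $k$, we have $(\delta(\sigma_s)\bs{w})(t,\bullet)=\delta(\sigma_s)(\bs{w}|_{\mathcal{S}_t})$ for every $t$. Hence $\delta(\sigma_s)\bs{w}=0$ if and only if $\bs{w}|_{\mathcal{S}_t}\in\ker\delta(\sigma_s)$ for all $t\in\N$. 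Similarly, $(\sigma_t^r\delta(\sigma_s)\bs{w})(t,\bullet)=\delta(\sigma_s)(\bs{w}|_{\mathcal{S}_{t+r}})$, so $\sigma_t^r\delta(\sigma_s)\bs{w}=0$ if and only if $\bs{w}|_{\mathcal{S}_t}\in\ker\delta(\sigma_s)$ for all $t\geqslant r$.

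\textbf{Part (i).} For ($\Rightarrow$), time-controllability gives $\B=\ker\delta(\sigma_s)$ by \Cref{prop:TC_scalar}(i). So every $\bs{w}\in\B$ satisfies $\delta(\sigma_s)\bs{w}=0$, and the dictionary yields $\bs{w}|_{\mathcal{S}_t}\in\ker\delta(\sigma_s)$ for all $t$.

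For ($\Leftarrow$), the dictionary shows $\B\subseteq\ker\delta(\sigma_s)$, i.e.\ $\delta(\sigma_s)\in\mathcal{R}$ by the Oberst correspondence. This inclusion alone does not give equality. To obtain the equality that \Cref{prop:TC_scalar}(i) asks for, I would read the hypothesis as intended: the lines of $\B$ are exactly the trajectories of the 1D system, so every collection of lines, each in $\ker\delta(\sigma_s)$, is admissible. Then $\ker\delta(\sigma_s)\subseteq\B$, which gives $\B=\ker\delta(\sigma_s)$ and hence time-controllability.

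\textbf{Part (ii).} This follows the same pattern with $\sigma_t^r\delta(\sigma_s)$. In ($\Rightarrow$), $\B=\ker\sigma_t^r\delta(\sigma_s)$, so the lines $\mathcal{S}_t$ with $t<r$ are unconstrained, consistent with ``$\in\R^{\N}$'', while the lines with $t\geqslant r$ lie in $\ker\delta(\sigma_s)$. In ($\Leftarrow$), the generation reading gives $\ker\sigma_t^r\delta(\sigma_s)=\B$, and \Cref{prop:TC_scalar}(ii) concludes.

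The main obstacle is this converse direction. Read literally, the condition only says $\delta(\sigma_s)\in\mathcal{R}$, or $\sigma_t^r\delta(\sigma_s)\in\mathcal{R}$, and a scalar $\mathcal{R}$ containing such an element need not be principal. For example, $\mathcal{R}=\langle\sigma_s,\sigma_t-1\rangle$ contains $\sigma_s$, but its behaviour is not of the form $\ker\delta(\sigma_s)$.

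To handle this, I would first try to make the statement precise in the way the surrounding text intends: the 1D system generates all trajectories, meaning arbitrary lines in $\ker\delta(\sigma_s)$ on $t\geqslant r$ (and arbitrary lines for $t<r$) assemble into elements of $\B$. With that reading, the equivalence is immediate from the dictionary.

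If only the literal condition is available, I would instead choose $\delta$ to be the minimal-degree generator of the ideal $\mathcal{R}\cap\A_s$. This ideal is principal, so a minimal-degree generator exists, and it is computable by \Cref{algo:Bsh} with $H=0$, where $\B_{\mathcal{S}_{0,0}}=\ker P_0(\sigma_s)$. I would then argue via \Cref{lem: 2Dto1D} and the Gr\"obner basis $G$ under the RR ordering that $\mathcal{R}=\langle\delta\rangle$, i.e.\ that $G$ consists of $\delta$ alone. That second route needs extra hypotheses, so I would note the gap explicitly rather than claim it.
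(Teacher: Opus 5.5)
Your proposal is correct and follows the paper's own argument, which likewise just combines \Cref{prop:TC_scalar} with the line-wise identity $(\sigma_t^r\delta(\sigma_s)\bs{w})(t,\bullet)=\delta(\sigma_s)\bigl(\bs{w}|_{\mathcal{S}_{t+r}}\bigr)$ (this is what ``the definition of the restriction'' means there). You are also right that the literal ``if'' directions only give $\delta(\sigma_s)\in\mathcal{R}$ (resp.\ $\sigma_t^r\delta(\sigma_s)\in\mathcal{R}$), and your example $\langle\sigma_s,\sigma_t-1\rangle$ shows they fail as stated; the paper's one-line proof silently uses the generation reading you adopt, namely $\B=\{\bs{w}\mid \bs{w}|_{\mathcal{S}_t}\in\ker\delta(\sigma_s)~\forall\, t\geqslant r\}$.
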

\begin{proof} 
The claims follow immediately from \Cref{prop:TC_scalar} and the definition of the restriction $\bs{w}|_{\mathcal{S}_t}$.
\end{proof}	

Note that $\A$ is not a principal ideal domain (PID). As a consequence, the equation ideal of 2D system may not be principal, i.e. it may not admit a single generator. Indeed, consider, for example $\B=\ker \begin{bsmallmatrix}
	\sigma_t-1\\
	\sigma_s-2
\end{bsmallmatrix}$, here, there does not exist $r(\sig)\in \R[\sig]$ such that $\B=\ker r(\sig)$. 
However, by \Cref{prop:TC_scalar}, the time-controllable (zero-time-controllable) systems admit the equation ideal with a single generator. Thus, in \Cref{algo:TC-ZTC} for testing time-controllability and zero-time-controllability, we calculate the Gr\"obner basis of the given equation ideal with respect to RR term ordering. By \Cref{algo:Bsh} we then calculate the 1D system corresponding to the first non-zero restriction to vertical lines. We assume that an upper bound on the search $H_{\rm max}$ is known.
To prove the correctness of \Cref{algo:TC-ZTC}, we first establish the following result.

\begin{lemma}\label{lem:foralgo2}
    Let $\B$ be a scalar autonomous 2D behaviour with equation ideal $\mathcal{R}$. Let $H\in \N$ and $\mathcal{R}_H=\mathcal{R}\cap \A_s^{(H)}[\sigma_t]$. Let $\mathcal{G}\subseteq \mathcal{R}_H$ satisfy $\langle\mathcal{G}\rangle_{\A_s}=\mathcal{R}_H$. Then, $\B$ is 
    \begin{itemize}
        \item[(i)] time-controllable if and only if, for $H=0$, $\mathcal{R}=\langle\mathcal{G}\rangle_{\A}$. 
        \item[(ii)] zero-time-controllable if and only if there exists $r\in \N$ such that $\mathcal{R}_H=\varnothing$ for every $H<r$, while, for $H=r$, $\mathcal{R}=\langle\mathcal{G}\rangle_{\A}$ and $\sigma_t^{-r}g\in \A_s$ for all $g\in \mathcal{G}$. 
    \end{itemize}
\end{lemma}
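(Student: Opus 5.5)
Both parts rest on \Cref{prop:TC_scalar}. Since $\R^{\N^2}$ is an injective cogenerator (so behaviours correspond bijectively to equation modules), the statement $\B=\ker\delta(\sigma_s)$ is equivalent to the ideal identity $\mathcal{R}=\langle\delta(\sigma_s)\rangle_{\A}$. Likewise, $\B=\ker\sigma_t^r\delta(\sigma_s)$ is equivalent to $\mathcal{R}=\langle\sigma_t^r\delta(\sigma_s)\rangle_{\A}$. I read ``$\mathcal{R}_H=\varnothing$'' as $\mathcal{R}_H=\{0\}$. Part (i) is the special case $r=0$ of part (ii), except that in (i) there is no side condition on $\mathcal{G}$, because $\mathcal{R}_0\subseteq\A_s$ automatically.

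\textbf{Necessity.} Suppose $\mathcal{R}=\langle\sigma_t^r\delta\rangle_{\A}$ with $\delta\in\A_s$ nonzero (autonomy forces $\mathcal{R}\neq0$).
\begin{enumerate}
\item Every nonzero $f\in\mathcal{R}$ has the form $f=a\,\sigma_t^r\delta$ with $a\neq 0$, so $\deg_{\sigma_t}f\geq r$. Hence $\mathcal{R}_H=\{0\}$ for $H<r$.
\item For $H=r$, the element $a\sigma_t^r\delta$ has $\sigma_t$-degree at most $r$ exactly when $\deg_{\sigma_t}a=0$, i.e.\ $a\in\A_s$. Therefore $\mathcal{R}_r=\A_s\,\sigma_t^r\delta$.
\item Any $\A_s$-generating set $\mathcal{G}$ of $\mathcal{R}_r$ then consists of elements $\sigma_t^r c\,\delta$ with $c\in\A_s$. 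So $\sigma_t^{-r}g\in\A_s$ for every $g\in\mathcal{G}$.
\item Since $\mathcal{G}$ generates $\mathcal{R}_r$ over $\A_s$, we have $\sigma_t^r\delta\in\langle\mathcal{G}\rangle_{\A}\subseteq\mathcal{R}$. This gives $\mathcal{R}=\langle\mathcal{G}\rangle_{\A}$.
\end{enumerate}
For (i) with $r=0$, this reads $\mathcal{R}_0=\A_s\delta$ and $\mathcal{R}=\langle\mathcal{G}\rangle_{\A}$.

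\textbf{Sufficiency.}
\begin{enumerate}
\item Assume $\mathcal{R}=\langle\mathcal{G}\rangle_{\A}$ with $\mathcal{G}\subseteq\mathcal{R}_r$ and $\sigma_t^{-r}g\in\A_s$ for all $g$. For (i) take $r=0$, where $\mathcal{G}\subseteq\mathcal{R}_0\subseteq\A_s$ holds automatically.
\item Write $\mathcal{R}_r=\A_s\,\sigma_t^r\cdot J$, where $J=\{\sigma_t^{-r}g\}$ generates an ideal of $\A_s$.
\item Because $\A_s$ is a PID, $\langle\sigma_t^{-r}g: g\in\mathcal{G}\rangle_{\A_s}=\langle\delta\rangle_{\A_s}$ for some $\delta\in\A_s$. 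Concretely, $\delta=\gcd_{g\in\mathcal{G}}(\sigma_t^{-r}g)$.
\item Then $\mathcal{R}_r=\A_s\sigma_t^r\delta$, so $\sigma_t^r\delta\in\mathcal{R}$, and every $g$ is an $\A_s$-multiple of $\sigma_t^r\delta$.
\item Hence $\mathcal{R}=\langle\mathcal{G}\rangle_{\A}=\langle\sigma_t^r\delta\rangle_{\A}$, and \Cref{prop:TC_scalar} gives zero-time-controllability (time-controllability when $r=0$).
\end{enumerate}
The hypothesis $\mathcal{R}_H=\{0\}$ for $H<r$ is not needed in this direction. It only records minimality of $r$, which the necessity argument supplies.

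\textbf{Main obstacle.} The delicate point is the necessity step showing that the generator has the \emph{specific} form $\sigma_t^r\delta$, and that $r$ can be taken as the first index where $\mathcal{R}_H\neq 0$. Uniqueness of a principal generator up to units of $\A$ (nonzero constants) settles this: the lowest $\sigma_t$-degree attained in $\mathcal{R}$ equals $r$. I would state the cogenerator correspondence explicitly to pass from kernels to ideals, and note that $\A$ is a domain so that degrees add.
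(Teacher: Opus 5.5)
Your proposal is correct and follows the paper's route. The `only if' direction comes from \Cref{prop:TC_scalar}, which you spell out via the kernel/equation-module correspondence and a $\sigma_t$-degree count where the paper only says it is immediate. The `if' direction uses that $\A_s$ is a PID to reduce $\mathcal{G}\subseteq\sigma_t^r\A_s$ to the single generator $\sigma_t^r\delta(\sigma_s)$, exactly as the paper does.
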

\begin{proof}
The `only if' part in both cases follows immediately by \Cref{prop:TC_scalar}; we therefore prove the `if' direction.
   $(i)$ Suppose the stated condition holds. Since $\mathcal{R}_H$ is an $\A_s$-module and $\A_s$ is PID, there exists $\delta(\sigma_s)\in \A_s$ such that $\mathcal{R}=\langle\delta(\sigma_s)\rangle_{\A}$ and therefore $\B=\ker \delta(\sigma_s)$.

    $(ii)$ Suppose the stated condition holds. Then, every $g\in \mathcal{G}$ is of the form $g=\sigma_t^r f(\sigma_s)$ for some $f(\sigma_s)\in \A_s$. Since $\A_s$ is a PID, there exists $\delta(\sigma_s)\in \A_s$ such that $\mathcal{R}=\langle\sigma_t^r\delta(\sigma_s)\rangle_{\A}$. Therefore $\B=\ker \sigma_t^r\delta(\sigma_s)$.
\end{proof}

\begin{algorithm}
	\caption{Test for time-controllability and zero-time-controllability}
	\label{algo:TC-ZTC}
	\begin{algorithmic}
		
		\STATE \textbf{Input:} A scalar autonomous 2D behaviour $\B=\ker R(\sig)$ and $H_{\rm max}\in \N$
		\STATE \textbf{Output:} Whether $\B$ is time-controllable (TC) / zero-time-controllable(ZTC) or not; if yes, then return $\delta(\sigma_s)\in \A_s$ satisfying \Cref{lem:dataTC}.
		
		\STATE \textbf{Solution:} Calculate Gr\"obner basis $G$ of the module $\mathcal{R}=\mathrm{rowspan}_\A R(\sig)$ w.r.t. RR term ordering.
		\STATE $temp \gets \text{0}$
		\FOR{$H = 0$ to $H_{\max}$}
		\STATE Compute $G_H=G\cap \A_s^{(H)}[\sigma_t]$ until $G_H \neq 0$
		\IF{$\langle G_H\rangle_\A =\mathcal{R}$}
		\IF{$H = 0$}
		\STATE \text{$\B$ is TC and $\delta(\sigma_s)=G_H$;} $temp \gets \text{1}$
		\ELSE 
		\IF{$\sigma_t^{-H} G_H\in \A_s$}
		\STATE \text{$\B$ is ZTC and $\delta(\sig)=\sigma_t^{-H}G_H$;} $temp \gets \text{1}$
		\ENDIF
		\ENDIF
		\ENDIF
		\ENDFOR
		\IF{$temp = \text{0}$}
		\STATE \text{$\B$ is neither TC nor ZTC.}
		\ENDIF
	\end{algorithmic}
\end{algorithm}

\begin{theorem}\label{thm:algo2}
		Let $\B$ be a scalar autonomous system. Then \Cref{algo:TC-ZTC} provides a test for time-controllability and zero-time-controllability. Moreover, if $\B$ is TC or ZTC, the algorithm returns $\delta(\sigma_s)$ satisfying \Cref{lem:dataTC}.
\end{theorem}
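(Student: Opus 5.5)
The plan is to reduce the correctness of \Cref{algo:TC-ZTC} to \Cref{lem:foralgo2}, with \Cref{lem:GB_intersection} supplying the link between $G_H$ and $\mathcal{R}_H$.

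\textbf{Step 1 (first nonzero $G_H$).} First I will show that the loop stops at the least $H$ with $\mathcal{R}_H\neq\{0\}$. By \Cref{lem:GB_intersection}, $\mathcal{R}_H=\langle\mathcal{G}_H\rangle_{\A_s}$. Hence $\mathcal{R}_H\neq\{0\}$ if and only if $G_H\neq\varnothing$. This is exactly the condition ``$\mathcal{R}_H=\varnothing$ for $H<r$'' in \Cref{lem:foralgo2}(ii), read as $\{0\}$. Call this minimal index $r$.

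\textbf{Step 2 (minimal-index generators).} At $H=r$ every $g\in G_r$ has $\sigma_t$-degree exactly $r$. Otherwise $g$ would lie in some $\mathcal{R}_{H'}$ with $H'<r$, contradicting minimality. So $d_i=r$ for all $i$, and $\mathcal{G}_r=G_r$. Thus $G_r$ is a set $\mathcal{G}$ with $\langle\mathcal{G}\rangle_{\A_s}=\mathcal{R}_r$, as \Cref{lem:foralgo2} requires. The algorithm's test $\langle G_H\rangle_\A=\mathcal{R}$ becomes $\langle\mathcal{G}\rangle_\A=\mathcal{R}$. The test $\sigma_t^{-H}G_H\subseteq\A_s$ becomes the divisibility condition of \Cref{lem:foralgo2}(ii).

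\textbf{Step 3 (case split).} If $r=0$ and the test passes, \Cref{lem:foralgo2}(i) gives TC. If $r>0$ and both tests pass, \Cref{lem:foralgo2}(ii) gives ZTC. If a test fails at the minimal $r$, then $\B$ is neither TC nor ZTC by the ``only if'' directions:
\begin{itemize}
\item \Cref{lem:foralgo2} is stated for any $\A_s$-generating set $\mathcal{G}$ of $\mathcal{R}_H$, so $G_r$ qualifies.
\item The index $r$ in (ii) is forced to be the minimal one, since a nonzero $\mathcal{R}_H$ for $H<r$ is excluded.
\end{itemize}
So checking only the first nonzero level suffices, provided $r\le H_{\max}$.

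\textbf{Step 4 (returned $\delta$).} I then extract $\delta(\sigma_s)$. When the tests pass, $\sigma_t^{-r}G_r\subseteq\A_s$ generates an ideal of the PID $\A_s$. I take its generator, the gcd of the elements of $\sigma_t^{-r}G_r$; for a reduced Gr\"obner basis this should be a single element. Then $\mathcal{R}=\langle\sigma_t^r\delta\rangle_\A$, so $\B=\ker\sigma_t^r\delta(\sigma_s)$. \Cref{lem:dataTC} follows as in \Cref{prop:TC_scalar}, since $\sigma_t^r\delta$ acts separately on each line $\mathcal{S}_t$.

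\textbf{Main obstacle.} The delicate part is the ``only if'' direction at the minimal level. I must verify that if $\mathcal{R}=\langle\sigma_t^r\delta\rangle$, then the Gr\"obner basis level $G_r$ really consists of $\A_s$-multiples of $\sigma_t^r\delta$ generating $\mathcal{R}$ over $\A$. I would argue this from $\mathcal{R}_r=\sigma_t^r\delta\,\A_s$ together with $\mathcal{R}_r=\langle G_r\rangle_{\A_s}$. I also need to handle the algorithm's loop semantics: it should stop at the first nonzero $G_H$ rather than continue to higher $H$.
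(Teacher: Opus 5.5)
Your proposal is correct and takes essentially the paper's route. The paper's proof just combines \Cref{lem:foralgo2} with \Cref{lem:GB_intersection}; your observation that $\mathcal{G}_r=G_r$ at the first nonzero level, and your choice of $\delta$ as the generator of $\langle\sigma_t^{-r}G_r\rangle_{\A_s}$, make explicit what that one-line argument leaves implicit. Your loop-semantics worry is also harmless: for any $H>r$ the set $G_H$ contains elements of $\sigma_t$-degree $r<H$, so the test $\sigma_t^{-H}G_H\subseteq\A_s$ fails and continuing the loop cannot produce a false verdict.
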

\begin{proof}
The correctness of the Algorithm for the testing for TC and ZTC follows from \Cref{lem:foralgo2}. Moreover, since $G$ is the Gr\"obner basis of the equation ideal $\mathcal{R}$, \Cref{lem:foralgo2} and \Cref{lem:GB_intersection} imply that when the system is TC or ZTC, $G_H$ in the \Cref{algo:TC-ZTC} is the Gr\"obner basis of $\mathcal{R}_H$. Hence, the corresponding $\delta(\sigma_s)$ in \Cref{algo:TC-ZTC} satisfies \Cref{lem:dataTC}.
\end{proof}	

%
We illustrate \Cref{algo:TC-ZTC} using the following examples.
\begin{example}
	Consider a scalar autonomous 2D system 
	$\B=\ker \begin{bsmallmatrix}
		\sigma_t\sigma_s^2\\
		\sigma_s^2-\sigma_t^2\sigma_s^2
	\end{bsmallmatrix}$.
Here the equation ideal is $\mathcal{R}=\langle\sigma_t\sigma_s^2, \sigma_s^2-\sigma_t^2\sigma_s^2 
\rangle_{\A}$. Applying \Cref{algo:TC-ZTC}, the Gr\"obner basis of $\mathcal{R}$ is $G=\sigma_s^2$. Hence, for $H=0$, we obtain $G_H=\sigma_s^2\neq 0$ and also $\langle G_H\rangle_{\A}=\langle G\rangle_{\A}=\mathcal{R}$. Therefore, by \Cref{algo:TC-ZTC}, $\B$ is time-controllable. Furthermore, every $\bs{w}\in \B$, we have $\bs{w}_{\mathcal{S}_t}\in \ker \sigma_s^2$ for all $t\in \N$.
\end{example}
\begin{example}
	Consider a scalar autonomous 2D system
	$\B=\ker \begin{bsmallmatrix}
		\sigma_t^2\sigma_s\\
		\sigma_t\sigma_s^2+\sigma_t\sigma_s\\
		\sigma_t\sigma_s^2
	\end{bsmallmatrix}$.
	Here, the equation ideal is $\mathcal{R}=\langle	\sigma_t^2\sigma_s,
	\sigma_t\sigma_s^2+\sigma_t\sigma_s,
	\sigma_t\sigma_s^2 
\rangle_{\A}$. Applying \Cref{algo:TC-ZTC}, the Gr\"obner basis of $\mathcal{R}$ is $G=\sigma_t\sigma_s$. Here, with $H=0$, we have $G_H= 0$, but for $H=1$, we obtain $G_H=\sigma_t\sigma_s\neq 0$. Moreover, for $H=1$, $\langle G_H\rangle_{\A}=\langle G\rangle_{\A}=\mathcal{R}$. Next, since $\sigma_t^{-1}G_H=\sigma_s\in \A_s$, by \Cref{algo:TC-ZTC} $\B$ is zero-time-controllable. Consequently, by \Cref{lem:dataTC}, for all $\bs{w}\in \B$, $\bs{w}_{\mathcal{S}_t}\in \ker \sigma_s^2$ for all $t\geqslant 1$ and $\bs{w}(0,\cdot)$ is free.
\end{example}

\section{Conclusion}\label{sec:conclusion}
In this paper, we presented a constructive proof that restricting a 2D system to a vertical strip yields 1D behaviour. Based on this result, we introduced the restriction-revealing (RR) term ordering and developed a Gr\"obner basis-based algorithm for computing the corresponding restricted 1D system for a given 2D system. For scalar autonomous 2D systems, we further applied this framework to devise an algorithm for testing time-controllability and zero-time-controllability. When these properties hold, the algorithm also computes an associated 1D system that determin entire trajectory of the original 2D system. Future work will extend the proposed approach to a general 2D system.

\bibliographystyle{IEEEtran}
\bibliography{ref}

\end{document}